\documentclass[a4paper,12pt]{article}

\usepackage{amscd}
\usepackage{amsfonts}
\usepackage{amsmath}
\usepackage{amssymb}
\usepackage{amsthm}
\usepackage{color} 
\usepackage[T1]{fontenc} 
\usepackage{here} 
\usepackage{mathrsfs} 
\usepackage{txfonts} 
\usepackage[all]{xy} 
\usepackage{algorithm} 
\usepackage{algpseudocode} 
\usepackage{listings} 

\allowdisplaybreaks

\theoremstyle{plain}
\newtheorem{thm}{Theorem}[section]

\theoremstyle{definition}

\newcommand{\vs}[1][0.2]{\vspace{#1in}\noindent\ignorespaces}
\newcommand{\ba}{\begin{array*}}
\newcommand{\ea}{\end{array*}}
\newcommand{\be}{\begin{eqnarray*}}
\newcommand{\ee}{\end{eqnarray*}}
\newcommand{\bi}{\begin{itemize}}
\newcommand{\ei}{\end{itemize}}
\newcommand{\bb}{\vs\begin{itembox}}
\newcommand{\eb}{\end{itembox}}
\newcommand{\bc}{\begin{center}}
\newcommand{\ec}{\end{center}}
\newcommand{\bs}{\vs\begin{screen}}
\newcommand{\es}{\end{screen}}

\def\ens#1{{\mathchoice{\left\{ #1 \right\}}{\{ #1 \}}{\{ #1 \}}{\{ #1 \}}}}
\def\set#1#2{{\mathchoice{\left\{ #1 \middle| #2 \right\}}{\{ #1 \mid #2 \}}{\{ #1 \mid #2 \}}{\{ #1 \mid #2 \}}}}
\def\r#1{\text{\rm #1}}

\def\Bigv#1{\left| #1 \right|}
\def\v#1{{\mathchoice{\Bigv{#1}}{| #1 |}{| #1 |}{| #1 |}}}

\def\ol#1{\overline{#1}{}}
\def\tl#1{\tilde{#1}{}}

\newcommand{\bN}{\mathbb{N}}

\newcommand{\bR}{\mathbb{R}}

\newcommand{\bZ}{\mathbb{Z}}

\newcommand{\cP}{\mathscr{P}}

\newcommand{\N}{\bN}

\newcommand{\R}{\bR}
\newcommand{\Z}{\bZ}

\newcommand{\Fp}{\mathbb{F}_p}

\newcommand{\Qp}{\mathbb{Q}_p}

\newcommand{\Zp}{\mathbb{Z}_p}

\algnewcommand\algorithmicbreak{{\bf break}}
\algnewcommand\Break{\algorithmicbreak{}}
\algnewcommand\algorithmiccontinue{{\bf continue}}
\algnewcommand\Continue{\algorithmiccontinue{}}

\newcommand{\fn}[2]{
  \footnote[0]{
    $
    \begin{array}{l}
      \r{MSC2020: #1} \\
      \r{Key words: #2}
    \end{array}
    $
  }
}

\title{$p$-adic Linear Regression for Random Sampling with Digitwise Noise}
\author{Tomoki Mihara}
\date{}

\begin{document}

\maketitle
\begin{abstract}
We propose a new probabilistic algorithm of $p$-adic linear regression for random sampling with digitwise noise. This includes a new probabilistic algorithm of modulo $p$ linear regression.
\end{abstract}

\tableofcontents
\fn{65F10}{$p$-adic number, optimisation, linear regression}

\section{Introduction}
\label{Introduction}

Let $p$ be a prime number. The notion of $p$-adic numbers is introduced by K.\ Hensel in 1897 in \cite{Hen97}, and plays a central role in modern number theory. Recently, the $p$-adic numbers are used also in other branches of science, because of many significant similarities to and differences from the real numbers. Application of the $p$-adic numbers also appears in computer science. For example, S.\ Albeverio, A.\ Khrennikov, and B.\ Tirrozi studied $p$-adic neural network in \cite{AKT99} and \cite{KT00}, P.\ E.\ Bradley studied dendrograms and clusterings using $p$-adic numbers in \cite{Bra08} and \cite{Bra09}, and so on. Introduction of \cite{Bra25} explains the history well. Especially, $p$-adic equations and $p$-adic optimisation related to $p$-adic regression and $p$-adic neural networks are recently studied as a frontier topic (cf.\ \cite{ZZ23}, \cite{ZZB24}, \cite{BMP25}, \cite{Zub25-1}, \cite{Zub25-2}, \cite{Ngu25}, \cite{Mih26-1}, \cite{Mih26-2}, and \cite{Mih26-3}).

\vs
See Introduction of \cite{Mih26-1} for the similarity and the difference of real optimisation and $p$-adic optimisation. For the reader's convenience, we give a duplicated explanation here. To begin with, linear algebraic methods independent of the coefficient field and combinatorial methods such as greedy algorithm, hill climbing, and simulated annealing work also for $p$-adic optimisation. The reader should be careful that since the non-Archimedean property ensures that the sum of small error terms is still small, it prevents reaching a far point by repetition of $p$-adically small steps. Therefore, optimisation based on repetition of small steps sometimes requires to use another distance than the $p$-adic distance itself such as a variant of Hamming distance on $p$-adic expansions.

\vs
Unfortunately, there are only few success in the study of $p$-adic counterparts of real methods based on gradients. As an exception, Newton's method also works for finding a zero of a multivariable polynomial function over the field $\Qp$ of $p$-adic numbers. However, it is not applicable to an optimisation problem of a general $p$-adic function.

\vs
In addition, we cannot directly consider differentials of a loss function $\epsilon$ of the form $X \to \R$ for a subset $X$ of $\Qp$, since elements in the domain $X$ and the codomain $\R$ do not share a common arithmetic structure. Even if we extend the notion of a differential so that it vanishes at any point where $\epsilon$ is locally constant, such a formulation is not useful in $p$-adic optimisation because unlike the real setting, $\epsilon$ can be locally constant at almost all points of the domain.

\vs
In the real setting, we can use the least squares method in linear regression. However, it does not practically work the $p$-adic setting. In order to explain the difference, we recall the reason why the least squares method is effective in the real setting. Suppose that we are given a sequence $\vec{x} = (x_i)_{i=0}^{N-1}$ of sample points of the domain of an unknown function $f$ of length $N \in \N$ and the corresponding sequence $\vec{y} = (y_i)_{i=0}^{N-1}$ of observed values of $f$. When we try to express $f$ as the sum $g + \epsilon$ of a good function $g$ such as a linear function and an error function $\epsilon$, then the least squares method is an optimisation method to find such $g$ and $\epsilon$ minimising the non-negative real amount
\be
\sum_{i=0}^{N-1} \v{\epsilon(x_i)}^2.
\ee
In the real setting, the smallness of the amount ensures the smallness of each error term $\v{\epsilon(x_i)}$ with $i \in \N \cap [0,N)$, and we can express the amount as
\be
\sum_{i=0}^{N-1} (f(x_i) - g(x_i))^2
\ee
without using the absolute value. Since this expression is differentiable with respect to parameters of $g$ in many settings, optimisation methods based on differential works.

\vs
On the other hand, in the $p$-adic setting, we cannot remove the absolute value function even if we square the values, because the minimisation of
\be
\v{\sum_{i=0}^{N-1} \epsilon(x_i)^2}
\ee
is not equivalent to the minimisation of
\be
\sum_{i=0}^{N-1} \v{\epsilon(x_i)}^2.
\ee
Even if we replace the square by other powers, the problem will not be solved. Therefore, when we deal with $p$-adic optimisation, we cannot concentrate on computation of a sum of $p$-adic numbers and need to consider a sum of real numbers as long as we aim at a minimisation problem of a sum of powers of absolute values.

\vs
We shortly explain preceding studies of $p$-adic regression. Given an orthonormal Schauder basis, i.e.\ a topological basis compatible with the norm structure, of the Banach $\Qp$-vector space of $p$-adic continuous functions on the $D$-dimensional affine space $\Zp^D$ over the ring $\Zp$ of $p$-adic integers for a $D \in \N_{> 0}$, then truncation of the expression of a given continuous function as an infinite linear combination of the basis provides a approximation with respect to the $\ell^{\infty}$-norm. In particular, if we ignore noise or estimate the $\ell^{\infty}$-norm of noise as a small value, the approximation gives $p$-adic regression. K.\ Mahler introduced an orthonormal Schauder basis, which is nowadays called {\it Mahler basis}, of the Banach $\Qp$-vector space of $p$-adic continuous functions on $\Zp^D$ for a $D \in \N_{> 0}$ (cf.\ \cite{Mah58} Theorem 1 for $D = 1$ and \cite{Mah80} \S 12 for $D = 2$, which can be naturally generalised to higher dimensional cases). We extended Mahler basis to a $p$-adic character group of a formal group over $\Zp$ through the observation that Mahler's expansion is a $p$-adic analogue of Fourier transform (cf.\ \cite{Mih21} Theorem 3.23 and Theorem 3.24). Van der Put introduced another basis, which is nowadays called {\it van der Put basis}, for $D = 1$, and F.\ Bambozzi and we studied the notion of a {\it generalised van der Put basis} for an ultrametrisable topological space (cf.\ \cite{BM24} Proposition 7.7) and an extension of $p$-adic Stone--Weierstrass theorem (cf.\ \cite{Kap50} Theorem and \cite{Ber90} 9.2.5.\ Theorem for the original theorem and \cite{BM24} Theorem 7.10 for the extension).

\vs
S.\ Albeverio, A.\ Khrennikov, and B.\ Tirrozi formulated the prototype of $p$-adic neural network using the $p$-adic universal approximation theorem based on van der Put basis in \cite{AKT99} and \cite{KT00}, and G.\ L.\ R.\ N'guessan developed a new framework of $p$-adic neural network based on van der Put basis with experiment results by the WordNet data set in \cite{Ngu25}. A.\ P.\ Zubarev introduced a method to reduce the dimension by composing an explicit homeomorphism $\Zp^D \cong \Zp$ for the case $D > 1$ in \cite{Zub25-1}, and formulated a $D$-dimensional $p$-adic optimisation problem as a $1$-dimensional polynomial regression based on the $1$-dimensional Mahler basis in \cite{Zub25-2}.

\vs
E.\ Amaldi and V.\ Kann showed that the maximal feasible subsystem problem of linear equations over the finite field $\Fp$ is APX-complete, i.e.\ complete for the class of problems which allow constant-factor approximations, in \cite{AK95} Proposition A.1. Since the optimisation problem of modulo $p$ linear regression for the $\ell^1$-norm of the trivial valuation on $\Fp$ is identical to the maximal feasible subsystem problem of linear equations over $\Fp$ and is included in the optimisation problem of $p$-adic linear regression for $\ell^q$-norm of the $p$-adic valuation on $\Qp$ for $q \in (0,\infty)$, we need heuristic algorithms when we deal with $p$-adic linear regression.

\vs
Despite of the importance of $p$-adic regression in modern computer science, not so many are known on $p$-adic linear regression. G.\ D.\ Baker, S.\ Mccallum, and D.\ Pattinson gave a lower bound of the number of sample points in a hyperplane optimal of $p$-adic linear regression for $\ell^1$-norm in \cite{BMP25}. We formulated a new $p$-adic optimisation problem as an infinitesimal limit of the least squares method, invented and compared various heuristic algorithms for $p$-adic polynomial regression with thorough analysis of their time and space complexity in \cite{Mih26-1}.

\vs
In this paper, we introduce a new probabilistic algorithm (Algorithm \ref{trailig digits linear regression}) of $p$-adic linear regression. The method is based on the repetition of a new probabilistic algorithm (Algorithm \ref{linear regression mod p}) of modulo $p$ linear regression, which is based on the repetition of a new probabilistic algorithm (Algorithm \ref{noise-free locus}) essentially detecting inclusion of affine subspaces. The method works under a milder assumption on sampling than our preceding work on $p$-adic polynomial regression in \cite{Mih26-1}.

\vs
We briefly explain contents of this paper. In \S \ref{Repetitive Inclusion Decision}, we introduce the probabilistic algorithm (Algorithm \ref{noise-free locus}) essentially detecting inclusion of affine subspaces, the probabilistic algorithm (Algorithm \ref{linear regression mod p}) of modulo $p$ linear regression, and results on a specific probabilistic model (Theorem \ref{probability of dimension n - 1} and Theorem \ref{probability of inclusion}). Results of experiment appear there. In \S \ref{Digitwise Linear Regression}, we introduce the probabilistic algorithm (Algorithm \ref{trailing digits linear regression}) of $p$-adic linear regression.

\section{Convention}
\label{Convention}

We denote by $\N$ the set of non-negative integers. For sets $S$ and $T$, we denote by $T^S$ the set of maps $S \to T$. We note that every $d \in \N$ is identified with $\N \cap [0,d)$ in set theory, and hence $T^d$ for a set $T$ formally means $T^{\N \cap [0,d)}$, which is naturally identified with the set of $d$-tuples in $T$.

\vs
Throughout the paper, $p$ denotes a fixed prime number. For an $x \in \Zp$ and a power $q \in \N$ of $p$, we denote by $x \bmod q \in \Z/q \Z$ the reduction of $x$ modulo $q$. For a vector $\vec{x}$ of $p$-adic integers and a power $q \in \N$ of $p$, we denote by $\vec{x} \bmod q$ the vector of integers modulo $q$ given by the reduction modulo $q$ applied to each entry of $\vec{x}$.

\vs
When we write a pseudocode, a for-loop along a subset of $\N$ denotes the loop of the ascending order, and a for-loop along $I$ denotes a loop in an arbitrary order.

\section{Repetitive Inclusion Decision}
\label{Repetitive Inclusion Decision}

We introduce linear regression over the field $\Fp$ of integers modulo $p$ based on repetition of solving decision problems on inclusion.

\vs
Let $D$ be a non-negative integer. For a $\vec{c} = (c_d)_{d=0}^{D} \in \Fp^{D+1}$ and an $\vec{x} = (x_d)_{d=0}^{D-1} \in \Fp^D$, we set
\be
\left\langle \vec{c} , \vec{x} \right\rangle \coloneqq \sum_{d=0}^{D-1} c_d x_d + c_D.
\ee
When we refer to a {\it linear equation} in this section, we mean an equation on $(\vec{x},y) \in \Fp^D \times \Fp$ of the form
\be
y = \left\langle \vec{c} , \vec{x} \right\rangle,
\ee
and call $\vec{c}$ {\it the coefficient vector of the linear equation}.

\vs
Let $I$ be a finite non-empty set, $X = (\vec{x}_i)_{i \in I} \in (\Fp^D)^I$ a sequence of vectors of integers modulo $p$, $Y = (y_i)_{i \in I} \in \Fp^I$ a sequence of integers modulo $p$, $V \subset \Fp^D \times \Fp$ an affine subspace of codimension $1$, and $r \in [0,1)$. For a subset $W \subset \Fp^D \times \Fp$, we set $I_W \coloneqq \set{i \in I}{(\vec{x}_i,y_i) \in W}$.

\vs
We assume that the affine hull of $\set{(\vec{x}_i,y_i)}{i \in I_V}$ in $\Fp^D \times \Fp$ coincides with $V$, and the inequality
\be
\frac{\#(I \setminus I_V)}{\# I} \leq r \ll 2^{-1},
\ee
holds. We call $r$ {\it a noise probability bound}, and $I_V$ {\it the noise-free locus} respectively. We consider how to estimate $V$ from the data of $X$ and $Y$.

\subsection{Detecting Noise-Free locus}
\label{Detecting Noise-Free locus}

Let $I' \subset I$. We denote by $W$ the affine hull of $\set{(\vec{x}_i,y_i)}{i \in I'}$. We give a probabilistic algorithm to detect that $I'$ is a noise-free locus, i.e.\ $I' \subset I_V$.

\vs
If $I'$ is a noise-free locus, then $W \subset V$ holds by the monotonicity of affine hull. On the other hand, if $I'$ is not a noise-free locus, then at least one of $i \in I'$ satisfies $(\vec{x}_i,y_i) \notin V$, and hence $W \subset V$ does not hold. Therefore it suffices to give a probabilistic algorithm to detect the inclusion $W \subset V$.

\vs
Let $J$ be a finite set, and $C = (\vec{c}_j)_{j \in J} \in (\Fp^{D+1})^J$ a sequence of coefficient vectors. We assume that the corresponding system of linear equations defines $W$, i.e.\ 
\be
W = \set{(\vec{x},y) \in \Fp^D \times \Fp}{\forall j \in J, y = \langle \vec{c}_j , \vec{x} \rangle},
\ee
and is reduced in the sense that the codimension of $W$ as an affine subspace of $\Fp^D \times \Fp$ coincides with $\# J$. In particular, we have $\# J \leq D + 1$ and $\# I' \geq (D + 1) - \# J + 1 = D + 2 - \# J$.

\vs
We note that $C$ is a system of solutions $\vec{c}$ of the system
\be
\forall i \in I', y_i = \left\langle \vec{c} , \vec{x}_i \right\rangle
\ee
of linear equations, and hence can be explicitly constructed by Gauss elimination applied to $\set{(\vec{x}_i,1 \mid y_i)}{i \in I'}$.

\vs
In order to deduce whether the inclusion $W \subset V$ is likely to hold or not, we observe what likely occurs when $W \subset V$ and what likely occurs when $W \not\subset V$.

\vs
Suppose $\# I$ is sufficiently large, $X$ is sufficiently randomly chosen from $\Fp^D$, and $r$ is sufficiently smaller. Although the current assumptions are vague, we will give a formal argument later in \S \ref{Analysis of Probability}. If $W = V$, then we have
\be
\frac{\# I_W}{\# I} = \frac{\# I_V}{\# I} \geq 1 - r \approx 1.
\ee
If $W \subset V$, then we have an expectation
\be
\frac{\# I_W}{\# I} & \approx & (1 - r) \cdot \frac{\# W}{\# V} + r \cdot \frac{\# W}{\# \Fp^{D+1} - \# V} = (1 - r) \cdot \frac{p^{D + 1 - \#J}}{p^D} + r \cdot \frac{p^{D + 1 - \# J}}{p^{D + 1} - p^D} \\
& = & \left(1 - \frac{p - 2}{p - 1} r \right) p^{- \# J + 1} > \frac{9}{10} p^{- \# J + 1}.
\ee
If $W \not\subset V$, then we have an expectation
\be
\frac{\# I_W}{\# I} & \approx & (1 - r) \cdot \frac{\# (V \cap W)}{\# V} + r \cdot \frac{\# W - \#(V \cap W)}{\# \Fp^{D+1} - \# V} \leq (1 - r) \cdot \frac{p^{D - \# J}}{p^D} + r \cdot \frac{p^{D + 1 - \# J}}{p^{D + 1} - p^D} \\
& = & \left( 1 + \frac{r}{p - 1} \right) p^{- \# J} < \frac{9}{10} p^{- \#J + 1}.
\ee
Therefore, if $\# I_W / \# I$ is greater than the threshold $\frac{9}{10} p^{- \# J + 1}$, then it is likely that $W \subset V$. Here is a pseudocode of this criterion:

\begin{figure}[H]
\begin{algorithm}[H]
\caption{Dynamic variant of Gauss elimination applied to an extended row echelon form $A$ and the vector corresponding to $i$}
\label{dynamic Gauss elimination}
\begin{algorithmic}[1]
\Function {DynamicGaussElimination}{$p,X,Y,A,i$}
	\State $v = (v_d)_{d=0}^{D+1} \gets (\vec{x}_i,1 \mid y_i)$
	\State Apply row elementary transformations by rows of $A$ to $v$
	\State $\r{solvable} \gets$ True
	\If {$v$ is not a zero vector}
		\State $d \gets \min \set{d \in \N \cap [0,D+1]}{v_d \neq 0}$
		\If {$d = D + 1$}
			\State $\r{solvable} \gets$ False
		\Else
			\State $v \gets v_d^{-1} v$
			\State Apply row elementary transformation by $v$ to each row of $A$
			\State Insert $v$ to $A$
		\EndIf
	\EndIf
	\State \Return $(\r{solvable},A)$
\EndFunction
\end{algorithmic}
\end{algorithm}
\end{figure}

\begin{figure}[H]
\begin{algorithm}[H]
\caption{Detecting that $I'$ is a noise-free locus using its extended row echelon form $A$}
\label{noise-free matrix}
\begin{algorithmic}[1]
\Function {NoiseFreeMatrix}{$p,I,X,Y,I',A$}
	\State $C \gets$ the system of the coefficients vectors of the linear equations defining $W$ explicitly associated to $A$
	\State $c \gets 0$ \Comment{variable for $\# I_W$}
	\ForAll {$i \in I$}
		\State $\r{solution} \gets$ True \Comment{variable for whether the vector corresponding to $i$ is a solution of the system of linear equations corresponding to $C$}
		\ForAll {$\vec{c} \in C$}
			\If {$y_i \neq \langle \vec{c} , \vec{x}_i \rangle$}
				\State $\r{solution} \gets$ False
				\State \Break
			\EndIf
		\EndFor
		\If {$\r{solution}$}
			\State $c \gets c + 1$
		\EndIf
	\EndFor
	\State $L \gets$ the number of rows of $A$ \Comment{$\# J = (D + 1) - (L - 1) = D + 2 - L$}
	\State \Return $\frac{c - L}{\# I - L} > \frac{9}{10} p^{-(D + 1 - L)}$
\EndFunction
\end{algorithmic}
\end{algorithm}
\end{figure}

Optionally, it is good to separate $I$ into training data $I_0$ and validation data $I_1$, to restrict the process to the case $I' \subset I_0$, replacing $I$ in line 4 by $I_1$, and replacing $\frac{c - L}{\# I - L}$ by $\frac{c}{\# I_1}$ in order to remove the fitting/validation bias.

\begin{figure}[H]
\begin{algorithm}[H]
\caption{Detecting that $I'$ is a noise-free locus}
\label{noise-free locus}
\begin{algorithmic}[1]
\Function {NoiseFreeLocus}{$p,I,X,Y,I'$}
	\State $A \gets$ the empty matrix \Comment{variable for the extended row echelon form corresponding to $I'$}
	\ForAll {$i \in I'$}
		\State $(\r{solvable},A) \gets$ \Call{DynamicGaussElimination}{$p,X,Y,A,i$}
		\If {$\r{solvable}$ is False}
			\State \Return False
		\EndIf
	\EndFor
	\State \Return \Call{NoiseFreeMatrix}{$p,I,X,Y,I',A$}.
\EndFunction
\end{algorithmic}
\end{algorithm}
\end{figure}

The reason why we employ the dynamic variant (Algorithm \ref{dynamic Gauss elimination}) of Gauss elimination is because we will apply difference calculus to Gauss elimination in the next subsection. Concerning the implementation of the insertion of $v$ to $A$ in Algorithm \ref{dynamic Gauss elimination}, it suffices to implement $A$ as an variadic vector of vectors, to simply append $v$ to $A$, and to memorise the correspondence from each non-zero column to a row.

\vs
This criterion is weak against the data sparseness problem, and frequently fails when the assumptions that $\# I$ is sufficiently large, $X$ is sufficiently randomly chosen from $\Fp^D$, and $r$ is sufficiently small do not hold. Especially when $\# J$ is much greater than $\log_p \# I$, we have
\be
\frac{\# I_W}{\# I} \geq \frac{\# I'}{\# I} \geq \frac{1}{\# I} \gg \frac{9}{10} p^{- \#J + 1}
\ee
and hence this criterion always returns ``True'' in this case. Therefore, this criterion should be applied only when $\# J$ is not too large. In addition, if $p$ is $2$ or $3$, then the difference between the lower bound $(1 - \frac{p - 2}{p - 1} r) p^{- \# J + 1}$ for the case $W \subset V$ and the upper bound $( 1 + \frac{r}{p - 1}) p^{- \# J}$ for the case $W \not\subset V$ is relatively small, and hence a false-positive can occur more frequently.

\vs
In the next subsection, we apply this criterion under the caution above.

\subsection{Linear Regression Modulo $p$}
\label{Linear Regression Modulo p}

In order to estimate $V$, it suffices to construct a noise-free locus $I' \subset I$ with $\# I' = D + 1$ such that the affine hull $W$ of $I'$ is of codimension $1$ in $\Fp^D \times \Fp$, because then $W$ coincides with $V$.

\vs
For this purpose, it suffices to give a recursive way for a noise-free locus $I'$ with $\# I' < D + 1$ such that the affine hull of $I'$ is of codimension $D + 2 - \# I'$ in $\Fp^D \times \Fp$ to find an $i \in I \setminus I'$ such that $I' \cup \ens{i}$ is a noise-free locus and the affine hull of $I' \cup \ens{i}$ is of codimension $D + 1 - \# I'$ in $\Fp^D \times \Fp$.

\vs
In order to find such an $i \in I \setminus I'$, it suffices to randomly choose an $i \in I$ and check whether $I' \cup \ens{i}$ is a noise-free locus and the affine hull of $I' \cup \ens{i}$ is of codimension $D + 1 - \# I'$ in $\Fp^D \times \Fp$. In particular, Algorithm \ref{noise-free locus} solves the problem if $\# I'$ is not too small.

\vs
In order to solve the problem that Algorithm \ref{noise-free locus} does not work when $\# I'$ is too small, we combine RANSAC-like estimation and Algorithm \ref{noise-free locus}. When $\# I'$ is too small, we skip to check whether $I' \cup \ens{i}$ is a noise-free locus and the affine hull of $I' \cup \ens{i}$ is of codimension $D + 1 - \# I'$ in $\Fp^D \times \Fp$. Namely, we simply randomly extend $I'$ until $\# I'$ becomes larger than a suitable threshold $n \leq D + 1$. After then, it suffices to check whether $I'$ is still desired, i.e.\ $I'$ is a noise-free locus and the affine hull of $I'$ is of codimension $D + 2 - \# I'$ in $\Fp^D \times \Fp$. If $I'$ is not suitable, it is good to initialise $I'$ and restart the process. If $I'$ is still desired, then it is good to apply Algorithm \ref{noise-free locus}. Here is a pseudocode of the process:

\begin{figure}[H]
\begin{algorithm}[H]
\caption{Extending indices $I'$ up to the threshold $n$}
\label{extending indices 1}
\begin{algorithmic}[1]
\Function {ExtendingIndices1}{$p,I,X,Y,I',A,L,n$}
	\While {$L < n$}
		\State $i \gets$ a random element of $I$
		\State $(\r{solvable},A) \gets$ \Call{DynamicGaussElimination}{$p,X,Y,A,i$}
		\If {$\r{solvable}$}
			\If {$A$ has $L + 1$ rows} \Comment{detecting the linear independency of $i$ and $I'$, and in particular the non-duplication of the choice of $i$}
				\State Insert $i$ to $I'$
				\State $L \gets L + 1$
			\EndIf
		\Else
			\State \Break
		\EndIf
	\EndWhile
	\State \Return $(I',A,L)$
\EndFunction
\end{algorithmic}
\end{algorithm}
\end{figure}

\begin{figure}[H]
\begin{algorithm}[H]
\caption{Extending indices $I'$ beyond the threshold $n$}
\label{extending indices 2}
\begin{algorithmic}[1]
\Function {ExtendingIndices2}{$p,I,X,Y,I',A,L,\r{rep}$}
	\State $c \gets 0$ \Comment{variable for the number of trials}
	\While {$L < D + 1$ and $c = 0$}
		\While {$c < \r{rep}$}
			\State $i \gets$ a random element of $I$
			\State $(\r{solvable},B) \gets$ \Call{DynamicGaussElimination}{$p,X,Y,A,i$} \Comment{the input $A$ is copied and is not changed}
			\If {$\r{solvable}$ is False, the number of rows of $B$ is $L$, or \Call{NoiseFreeMatrix}{$p,I,X,Y,I' \cup \ens{i},B$} is False}
				\State $c \gets c + 1$
				\State \Continue
			\EndIf
			\State $c \gets 0$
			\State $A \gets B$
			\State $L \gets L + 1$
			\State Insert $i$ to $I'$
			\State \Break
		\EndWhile
	\EndWhile
	\State \Return $(I',A,L)$
\EndFunction
\end{algorithmic}
\end{algorithm}
\end{figure}

If $I$ is separated into training data $I_0$ and validation data $I_1$ as the explanation after Algorithm \ref{noise-free matrix}, $I$ in line 3 should be replaced by $I_0$.

\begin{figure}[H]
\begin{algorithm}[H]
\caption{Estimation of $V$ from $(I,X,Y)$}
\label{linear regression mod p}
\begin{algorithmic}[1]
\Function {LinearRegressionModulo}{$p,I,X,Y,\r{rep}$}
	\State $n \gets \max(1, \min(\# I - 1, D + 1 - \lfloor \log_p \# I \rfloor))$ \Comment{threshold of $\# I'$}
	\While {True}
		\State $I' \gets \emptyset$
		\State $A \gets$ an empty matrix \Comment{variable for the extended row echelon form corresponding to $I'$}
		\State $L \gets 0$ \Comment{variable for the number of rows of $A$}
		\State $(I',A,L) \gets$ \Call{ExtendingIndices1}{$p,I,X,Y,I',A,L,n$}
		\If {$L < n$ or \Call{NoiseFreeMatrix}{$p,I,X,Y,I',A$} is False}
			\State \Continue
		\EndIf
		\State $(I',A,L) \gets$ \Call{ExtendingIndices2}{$p,I,X,Y,I',A,L,\r{rep}$}
		\If {$L = D + 1$}
			\State \Return the coefficient vector $\vec{c}$ of the linear equation defining $W$ explicitly associated to $A$
		\EndIf
	\EndWhile
\EndFunction
\end{algorithmic}
\end{algorithm}
\end{figure}

If $I$ is separated into training data $I_0$ and validation data $I_1$ as the explanation after Algorithm \ref{noise-free matrix}, $I$ in line 5 should be replaced by $I_0$.

\vs
The hyper parameter $\r{rep}$ of the number of retrials to search a new $i$ to extend $I'$ should not be much larger than the expected value $(1 - (1 - p^{-1})r)^{-1}$ of the number of trials when $I'$ is actually a noise-free locus. Under the assumption $r \ll 2^{-1}$, it suffices to set $\r{rep}$ as $2$ or $3$. Concerning an implementation of $I'$, it suffices to simply implement $I'$ as a variadic vector, because we do not use the order or the membership relation.

\subsection{Analysis of Probability}
\label{Analysis of Probability}

We formalise the intuitive assumptions in \S \ref{Detecting Noise-Free locus} and \S \ref{Linear Regression Modulo p}. Let $D$, $I$, $r$, and $V$ be the same as in the beginning of \S \ref{Repetitive Inclusion Decision}.

\vs
Let $(\Omega,\cP)$ be a probability space. A {\it random sampling of $V$ in $\Fp^D \times \Fp$ of noise probability $r$} is a tuple $(\vec{x},y,b)$ of a uniform random variable $\vec{x} \colon \Omega \to \Fp^D$, a random variable $y \colon \Omega \to \Fp$, and a random variable $b \colon \Omega \to \ens{0,1}$ with $P(\set{\omega \in \Omega}{b(\omega) = 0}) = r$ satisfying the following:
\bi
\item[(1)] If $r > 0$, then $\vec{x}$ and $y$ restricted to $\set{\omega \in \Omega}{b(\omega) = 0}$ are uniform.
\item[(2)] The probability of $(\vec{x}(\omega),y(\omega)) \in V$ restricted to $\set{\omega \in \Omega}{b(\omega) = 1}$ is $1$.
\ei
Let $((\vec{x}_i,\vec{y}_i,b_i))_{i \in I}$ be a sequence of random samplings of $V$ in $\Fp^D \times \Fp$ of noise probability $r$. Set $X \coloneqq (\vec{x}_i)_{i \in I}$ and $Y \coloneqq (y_i)_{i \in I}$. We assume that entries of $X$ are independent, entries of $Y$ are independent, and entries of $(b_i)_{i \in I}$ are independent.

\begin{thm}
\label{probability of dimension n - 1}
Set $u \coloneqq 1 - (1 - p^{-1}) r$. For an $\omega \in \Omega$, let $I'(\omega) \subset I$ denote $I'$ at the first arrival of line 8 in Algorithm \ref{linear regression mod p} for \Call{LinearRegressionModulo}{$p,I,X(\omega),Y(\omega),\r{rep}$} for a $\r{rep} \in \N$, and $W(\omega) \subset \Fp^D \times \Fp$ denote the affine hull of $\set{(\vec{x}_i(\omega),y_i(\omega))}{i \in I'(\omega)}$. Then the probability $P$ that $W(\omega)$ is an affine subset of $V$ of dimension $n - 1$ satisfies
\be
u^n \prod_{d=0}^{n-2} (1 - p^{-(D - d)}) \leq P \leq \frac{u^n \prod_{d=0}^{n-2} (1 - p^{-(D - d)})}{(1 - p^{-(D + 2 - n)} u)^{n - 1}},
\ee
where $n$ denotes the threshold in line 2 in Algorithm \ref{linear regression mod p}. 
\end{thm}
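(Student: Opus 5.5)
The plan is to follow the run of \Call{ExtendingIndices1}{$p,I,X,Y,\emptyset,\emptyset,0,n$} step by step and treat it as a chain of stages indexed by the current number $k = L \in \{0,\dots,n-1\}$ of accepted indices. At the first arrival of line 8, $W(\omega)$ is an affine subset of $V$ of dimension $n-1$ exactly when both of the following hold:
\begin{itemize}
\item[(a)] the loop terminated with $L = n$ rather than by \Break;
\item[(b)] every accepted point lies in $V$.
\end{itemize}
Affine independence is automatic, because a point is only inserted when $A$ gains a row. So $P$ is the probability that, for each $k$, the stage starting from a $k$-point noise-free set $I'$ with $(k-1)$-dimensional hull $W_k \subset V$ ends by accepting a point of $V \setminus W_k$. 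By the independence assumptions on $X$, $Y$ and $(b_i)_{i\in I}$, the conditional success probability of each stage depends only on $k$, so $P = \prod_{k=0}^{n-1} s_k$.

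Next I compute the probability that a single freshly sampled point $(\vec{x},y)$ falls in a given affine subspace $U \subset V$ of dimension $m$. On $b=1$ the point is uniform on $V$; on $b=0$ it is uniform on $\Fp^{D+1}$. This gives
\[
(1-r)\,p^{m-D} + r\,p^{m-D-1} = u\,p^{-(D-m)}.
\]
In particular the point lies in $V$ with probability $u$, and in $W_k$ with probability $w_k \coloneqq u\,p^{-(D+1-k)}$ for $k \geq 1$, while $w_0 = 0$.

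Each draw in a stage has three kinds of outcome.
\begin{itemize}
\item \emph{Neutral}: the point lies in $W_k$, or the chosen index is already in $I'$. The Gauss elimination is then solvable with no new row, and the loop simply redraws.
\item \emph{Success}: the point lies in $V \setminus W_k$, with probability $q_k = u - w_k$.
\item \emph{Failure}: the point lies outside $V$. Either it is unsolvable (\Break), or it is accepted and spoils (b) permanently.
\end{itemize}
Neutral draws only restart the stage. Hence $s_k = q_k/(q_k + f_k)$, where $f_k$ is the failure probability of a fresh draw. Since fresh points are independent of the history, $q_k + f_k = 1 - w_k$, and a re-drawn old index is neutral and does not change this ratio. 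Therefore $s_0 = u$ and, writing $d = k-1$,
\[
s_k = \frac{u\,(1 - p^{-(D-d)})}{1 - u\,p^{-(D-d)}}, \qquad d = 0,\dots,n-2 .
\]

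Multiplying over $k$ gives the two bounds at once. For the lower bound, drop the denominators, which are at most $1$. For the upper bound, use $D-d \geq D+2-n$ for $d \leq n-2$, so that $1 - u\,p^{-(D-d)} \geq 1 - u\,p^{-(D+2-n)}$, and bound all $n-1$ denominators by this single value. This yields exactly the stated inequality.

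The main obstacle I expect is justifying the ratio formula $s_k = q_k/(1-w_k)$ rigorously. The samples are attached to indices, and the loop draws from a finite $I$ with replacement, so repeated indices and the finiteness of $I$ break the naive ``fresh i.i.d.\ draw'' picture. My first attempt would be to condition on the sequence of chosen indices. Draws of old indices are neutral and carry no new randomness. Each first-time index contributes an independent fresh sample, and the stage ends at the first non-neutral fresh sample, which gives the ratio by a geometric-series argument. I would also need to remark that the stage terminates almost surely whenever $1 - w_k > 0$.
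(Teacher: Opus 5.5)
Your classification of draws, the per-sample probability $u\,p^{-(D-m)}$, and the denominator estimate are all correct. The gap is the identity $P=\prod_{k=0}^{n-1}s_k$, which is false, and it fails for exactly the finiteness reason you flagged. The ratio $s_k=q_k/(1-w_k)$ is the sum of a full geometric series over neutral fresh samples, so it assumes an inexhaustible supply of unused indices. Since $I$ is finite, suppose that at some stage $k\ge 1$ all $m$ indices not yet drawn give points of $W_k$, which has probability $w_k^m>0$. From then on every draw is an old index, which is neutral, so the while-loop of Algorithm \ref{extending indices 1} never terminates and line 8 is never reached. Consequences:
\begin{itemize}
\item The stage does not terminate almost surely.
\item Given $m$ unused indices, the stage succeeds with probability $q_k(1-w_k^m)/(1-w_k)$, which depends on the history and not only on $k$.
\item For $n\ge 2$, the true $P$ sums, over first-hitting times $H\le\#I$, the probability that the hull of the first $H$ distinct samples first becomes an $(n-1)$-dimensional subset of $V$ at time $H$. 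Your $\prod_k s_k$ is the same sum over all $H\ge n$, so it is strictly larger than $P$.
\end{itemize}
In particular, your lower bound, obtained by dropping the denominators of $\prod_k s_k$, is not justified.

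The repair is short. The upper bound survives: conditionally on any history, stage $k$ succeeds with probability at most $s_k$, so $P\le\prod_k s_k$, and your estimate $1-u\,p^{-(D-d)}\ge 1-u\,p^{-(D+2-n)}$ finishes it. For the lower bound you need a separate event: the first $n$ distinct indices drawn are successive successes. This event has probability $u\prod_{d=0}^{n-2}u(1-p^{-(D-d)})$, and it is available because $n\le\#I$ by line 2 of Algorithm \ref{linear regression mod p}.

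This is what the paper does. It writes $P$ as the sum over first-hitting times $H\in[n,\#I]$ and takes the $H=n$ term as the lower bound. For the upper bound it replaces $p^{-(D-d)}$ by $p^{-(D+2-n)}$, extends the sum to $H=\infty$, and evaluates a negative binomial series. Your product is that extended sum before the uniform replacement.
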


\begin{proof}
We have $0 < p^{-(D + 2 - n)} u < 1$. In particular, the right hand side in the assertion makes sense. For any affine subset $W' \subset V$ of dimension $d \in \N \cap [0,D]$ and any $i \in I$,
we have
\be
\cP(\set{\omega \in \Omega}{(\vec{x}_i(\omega),y_i(\omega)) \in W'}) & = & (1 - r) \times \frac{\# W'}{\# V} + r \times \frac{\# W'}{\#(\Fp^D \times \Fp)} \\
& = & (1 - r) \times p^{-(D - d)} + r \times p^{-(D - d + 1)} \\
& = & p^{-(D - d)} u,
\ee
and hence
\be
\cP(\set{\omega \in \Omega}{(\vec{x}_i(\omega),y_i(\omega)) \in V \setminus W'}) & = & p^{-(D - D)} u - p^{-(D - d)} u \\
& = & (1 - p^{-(D - d)}) u.
\ee
The assertion for the case $n = 1$ follows from the first formula applied to $W' = V$. Suppose $n > 1$. Let $\vec{i} = (i_h)_{h=0}^{\# I - 1}$ be a permutation of $I$, which plays a role of the time series data of the non-duplicated values $i$ in line 3 in Algorithm \ref{extending indices 1}. For any $H \in \N \cap [0,\# I]$, we have
\be
\cP(\set{\omega \in \Omega}{\forall h \in \N \cap [0,H)[(\vec{x}_{i_h}(\omega),y_{i_h}(\omega)) \in V]}) = u^H.
\ee
For each $H \in \N$, we set $S_H \coloneqq \set{(h_d)_{d=0}^{n-2} \in \N^{n-1}}{\sum_{d=0}^{n-2} h_d = H - n}$. For each $(H,k) \in \N^2$, we set $C_{H,k} \coloneqq \# \set{(h_d)_{d=0}^{n-2} \in S_H}{\sum_{d=0}^{n-2} d h_d = k}$. For any $H \in \N \cap [n,\# I]$, the probability $P_H$ that the affine hull of $\set{(\vec{x}_{i_h},y_{i_h})}{h \in \N \cap [0,H-1)}$ is not an affine subset of $V$ of dimension $n - 1$ and the affine hull of $\set{(\vec{x}_{i_h},y_{i_h})}{h \in \N \cap [0,H)}$ is an affine subset of $V$ of dimension $n - 1$, which does not depend on $\vec{i}$, satisfies
\be
P_H & = & \sum_{(h_d)_{d=0}^{n-2} \in S_H} u \prod_{d=0}^{n-2} \left( p^{-(D - d)} u \right)^{h_d} (1 - p^{-(D - d)}) u \\
& = & u^H \sum_{(h_d)_{d=0}^{n-2} \in S_H} \prod_{d=0}^{n-2} p^{-(D - d)h_d}(1 - p^{-(D - d)}) \\
& = & u^H p^{-D(H - n)} \left( \prod_{d=0}^{n-2} (1 - p^{-(D - d)}) \right) \sum_{(h_d)_{d=0}^{n-2} \in S_H} \prod_{d=0}^{n-2} p^{d h_d} \\
& = & u^H p^{-D(H - n)} \left( \prod_{d=0}^{n-2} (1 - p^{-(D - d)}) \right) \sum_{(h_d)_{d=0}^{n-2} \in S_H} p^{\sum_{d=0}^{n-2} d h_d} \\
& = & u^H p^{-D(H - n)} \left( \prod_{d=0}^{n-2} (1 - p^{-(D - d)}) \right) \sum_{k=0}^{(n-2)(H-n)} C_{H,k} p^k.
\ee
We have $P = \sum_{H=n}^{\# I} P_H$. By $n \leq \# I$, we obtain
\be
P \geq P_n = u^n p^0 \left( \prod_{d=0}^{n-2} (1 - p^{-(D - d)}) \right) C_{n,0} p^0 = u^n \prod_{d=0}^{n-2} (1 - p^{-(D - d)}).
\ee
On the other hand, we have
\be
P & = & \sum_{H=n}^{\# I} P_H \\
& = & \sum_{H=n}^{\# I} u^H p^{-D(H - n)} \left( \prod_{d=0}^{n-2} (1 - p^{-(D - d)}) \right) \sum_{k=0}^{(n-2)(H-n)} C_{H,k} p^k \\
& \leq & \sum_{H=n}^{\# I} u^H p^{-(D + 2 - n)(H - n)} \left( \prod_{d=0}^{n-2} (1 - p^{-(D - d)}) \right) \sum_{k=0}^{(n-2)(H-n)} C_{H,k} \\
& = & \sum_{H=n}^{\# I} u^H p^{-(D + 2 - n)(H - n)} \left( \prod_{d=0}^{n-2} (1 - p^{-(D - d)}) \right) \# S_H \\
& = & \sum_{H=n}^{\# I} u^H p^{-(D + 2 - n)(H - n)} \left( \prod_{d=0}^{n-2} (1 - p^{-(D - d)}) \right) \binom{H - 2}{n - 2} \\
& = & \sum_{H'=0}^{\# I - n} u^{H' + n} p^{-(D + 2 - n)H'} \left( \prod_{d=0}^{n-2} (1 - p^{-(D - d)}) \right) \binom{H' + n - 2}{n - 2} \\
& = & u^n \left( \prod_{d=0}^{n-2} (1 - p^{-(D - d)}) \right) \sum_{H'=0}^{\# I - n} \binom{H' + n - 2}{H'} \left( p^{-(D + 2 - n)} u \right)^{H'} \\
& \leq & u^n \left( \prod_{d=0}^{n-2} (1 - p^{-(D - d)}) \right) \sum_{H'=0}^{\infty} \binom{H' + n - 2}{H'} \left( p^{-(D + 2 - n)} u \right)^{H'} \\
& = & u^n \left( \prod_{d=0}^{n-2} (1 - p^{-(D - d)}) \right) \sum_{H'=0}^{\infty} \binom{- n + 1}{H'} \left( -p^{-(D + 2 - n)} u \right)^{H'} \\
& = & u^n \left( \prod_{d=0}^{n-2} (1 - p^{-(D - d)}) \right) \left( 1 - p^{-(D + 2 - n)} u \right)^{- n + 1} \\
& = & \frac{u^n \prod_{d=0}^{n-2} (1 - p^{-(D - d)})}{(1 - p^{-(D + 2 - n)} u)^{n - 1}}.
\ee
\end{proof}

Although Theorem \ref{probability of dimension n - 1} does not give estimation for the second or later arrival, it is natural to expect how many times the process arrives line 2 in Algorithm \ref{linear regression mod p} until the first accomplishment that the affine hull of the rows of $A$ is an affine subset of $V$ of dimension $n - 1$ is approximately bounded above by
\be
P^{-1} \leq u^{-n} \prod_{d=0}^{n-2} (1 - p^{-(D - d)})^{-1}.
\ee
We next formalise the argument on $\# I_W/\# I$ in \S \ref{Detecting Noise-Free locus}.

\begin{thm}
\label{probability of inclusion}
Let $W \subset \Fp^D \times \Fp$ be an affine subset. For any $K \in \N \cap [0,\# I]$ and $i \in I$, the probability $P_{W,\leq K}$ of $\# \set{i \in I}{(\vec{x}_i(\omega),y_i(\omega)) \in W} \leq K$ satisfies
\be
P_{W,\leq K} = F \left( K ; \# I , p^{- D}(\#(V \cap W) - (\#(V \cap W) - p^{- 1} \#W)r) \right).
\ee
where $F(x;N,q)$ for an $(N,q) \in \N \times [0,1]$ denotes the cumulative distribution function for the binomial distribution for $N$ independent experiments with success probability $q$.
\end{thm}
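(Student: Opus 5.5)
The plan is to reduce the statement to a single-sample computation and then use independence across $i \in I$. For each $i \in I$, let $E_i \coloneqq \set{\omega \in \Omega}{(\vec{x}_i(\omega),y_i(\omega)) \in W}$. Then $\# \set{i \in I}{(\vec{x}_i(\omega),y_i(\omega)) \in W} = \sum_{i \in I} \mathbf{1}_{E_i}(\omega)$. If the indicators $\mathbf{1}_{E_i}$ are independent and all have the same success probability $q$, this sum is binomially distributed with parameters $(\# I, q)$. Then $P_{W,\leq K} = F(K;\# I,q)$ by the definition of $F$. So the whole proof reduces to two things: computing $q$, and justifying independence.

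\textbf{Step 1: computing $q$.} I would condition on $b_i$, exactly as in the first display of the proof of Theorem \ref{probability of dimension n - 1}.

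On $\set{b_i = 0}$, which has probability $r$, the pair $(\vec{x}_i,y_i)$ is uniform on $\Fp^D \times \Fp$ by condition (1). So the conditional probability of $E_i$ is $\# W / p^{D+1}$.

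On $\set{b_i = 1}$, which has probability $1 - r$, the pair lies in $V$ almost surely by condition (2). I claim it is uniform on $V$, so that the conditional probability of $E_i$ is $\#(V \cap W)/\# V = p^{-D} \#(V \cap W)$. To justify this:
\bi
\item[(a)] $\vec{x}_i$ is uniform on all of $\Omega$ and also uniform on $\set{b_i = 0}$. Writing the overall law as the mixture of the two conditional laws shows that $\vec{x}_i$ is uniform on $\set{b_i = 1}$ whenever $r < 1$, which holds since $r \in [0,1)$.
\item[(b)] $V$ has codimension $1$. The standing assumption that its points are sampled with $\vec{x}$ uniform requires $V$ to be a graph $y = \langle \vec{c}, \vec{x} \rangle$. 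Then $y_i$ is a function of $\vec{x}_i$ on $\set{b_i = 1}$, so the pair is uniform on $V$.
\ei
This is the same uniformity used implicitly via $\# W'/\# V$ in Theorem \ref{probability of dimension n - 1}, so I would invoke it the same way.

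Adding the two cases gives
\be
q = (1 - r) p^{-D} \#(V \cap W) + r p^{-(D+1)} \# W = p^{-D}\left(\#(V \cap W) - (\#(V \cap W) - p^{-1}\# W) r\right),
\ee
which matches the third argument of $F$ in the statement. This $q$ does not depend on $i$.

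\textbf{Step 2: independence.} Each $E_i$ is determined by the triple $(\vec{x}_i,y_i,b_i)$. So it suffices to have mutual independence of these triples across $i \in I$. I expect this to be the main obstacle. The stated assumption only says that the entries of $X$, of $Y$, and of $(b_i)_{i \in I}$ are each independent separately. I would read the standing setup as saying that the samplings $(\vec{x}_i,y_i,b_i)$, $i \in I$, are independent random samplings; this is also what is used in Theorem \ref{probability of dimension n - 1} when products of single-sample probabilities are taken. I would state this reading explicitly in the proof.

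Given independence, $\sum_{i \in I} \mathbf{1}_{E_i}$ is a sum of $\# I$ i.i.d.\ Bernoulli$(q)$ variables. Its cumulative distribution function at $K$ is $F(K;\# I,q)$, which is the claim.
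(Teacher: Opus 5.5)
Your proposal is correct and follows the paper's proof. The paper uses the indicators $\delta_i$, computes the single-sample success probability $(1-r)\,\#(V\cap W)/\#V + r\,\#W/p^{D+1}$ by splitting on $b_i$, and then writes down the binomial probabilities and sums them up to $K$. Your additional points fill in steps the paper uses silently: that $\vec{x}_i$ stays uniform on $\{b_i=1\}$ and $V$ must be a graph, so the pair is uniform on $V$, and that mutual independence of the triples $(\vec{x}_i,y_i,b_i)$ is needed beyond the separately stated independence of $X$, $Y$, and $(b_i)_{i\in I}$.
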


\begin{proof}
For each $i \in I$, we denote by $\delta_i$ the random variable
\be
\Omega & \to & \ens{0,1} \\
\omega & \mapsto & 
\left\{
\begin{array}{ll}
1 & ((\vec{x}_i(\omega),y_i(\omega)) \in W) \\
0 & ((\vec{x}_i(\omega),y_i(\omega)) \notin W)
\end{array}
\right..
\ee
Then we have
\be
\# \set{i \in I}{(\vec{x}_i(\omega),y_i(\omega)) \in W} = \sum_{i \in I} \delta_i(\omega).
\ee
For each $k \in \N \cap [0,\# I]$, we set
\be
P_{W,k} \coloneqq \cP \left( \set{\omega \in \Omega}{\# \set{i \in I}{(\vec{x}_i(\omega),y_i(\omega)) \in W} = k} \right).
\ee
Set $C \coloneqq \# W$ and $C' \coloneqq \#(V \cap W)$. For any $i \in I$, we have
\be
\cP(\set{\omega \in \Omega}{\delta_i(\omega) = 1}) & = & (1 - r) \times \frac{C'}{\# V} + r \times \frac{C}{\#(\Fp^D \times \Fp)} \\
& = & (1 - r) \times \frac{C'}{p^D} + r \times \frac{C}{p^{D+1}} = p^{- D}(C' - (C' - p^{-1}C)r).
\ee
Therefore, for any $k \in \N \cap [0,\# I]$, we have
\be
P_{W,k} & = & \binom{\# I}{k} (p^{- D}(C' - (C' - p^{-1}C)r))^k (1 - p^{- D}(C' - (C' - p^{-1}C)r))^{\# I - k}.
\ee
This implies
\be
P_{W,\leq K} & = & \sum_{k=0}^{K} P_{W,k} \\
& = & \sum_{k=0}^{K} \binom{\# I}{k} (p^{- D}(C' - (C' - p^{-1}C)r))^k (1 - p^{- D}(C' - (C' - p^{-1}C)r))^{\# I - k} \\
& = & F \left( K ; \# I , p^{- D}(C' - (C' - p^{-1}C)r) \right).
\ee
\end{proof}

Let $W \subset \Fp^D \times \Fp$ be an affine subset defined by a reduced system of linear equations indexed by a finite set $J$, as in \S \ref{Detecting Noise-Free locus}. Then Theorem \ref{probability of inclusion} implies
\be
P_{W,\leq \lfloor \frac{9}{10} p^{-D} \# I \rfloor} \ 
\left\{
\begin{array}{cll}
= & F \left( \left\lfloor \frac{9}{10} p^{- \# J + 1} \# I \right\rfloor ; \# I , p^{- \# J + 1} u \right) & (W \subset V) \\
\geq & F \left( \left\lfloor \frac{9}{10} p^{- \# J + 1} \# I \right\rfloor ; \# I , p^{- \# J} \right) & (W \not\subset V)
\end{array}
\right..
\ee
The right hand side for the case $W \subset V$ converges to $0$ as $\# I \to \infty$ if $r < (10(1 - p^{-1}))^{-1}$, and that for the case $W \not\subset V$ converges to $1$ as $\# I \to \infty$ by law of large number. This means that Algorithm \ref{noise-free matrix} detects the noise-freeness of $I'$ with low false-negative rate and low false-positive rate if $I$ is sufficiently large.

\subsection{Experiment}
\label{Experiment}

We applied Algorithm \ref{linear regression mod p} to random cases with $p = 7$, $I = \N \cap [0,10^5)$, and $\r{rep} = 3$. In the experiment, we chose a random non-zero vector $\vec{c} \in \Fp^{D+1}$ to define $V \coloneqq \set{(\vec{x},y) \in \Fp^D \times \Fp}{y = \langle \vec{c} , \vec{x} \rangle}$, random vectors $X = (\vec{x}_i)_{i \in I} \in (\Fp^D)^I$, a random subset $I' \subset I$ with $\# (I \setminus I') \approx r \# I$, a vector $Y = (y_i)_{i \in I}$ by setting $y_i \coloneqq \langle \vec{c} , \vec{x}_i \rangle$ for each $i \in I'$ and randomly choosing $y_i \in \Fp$ for each $i \in I \setminus I'$ literally by the python code
\begin{lstlisting}
import random
random.seed(100+case)
def subs(c,x):return sum(c[d]*x[d] for d in range(D+1))%p
c=[random.randint(0,p-1) for d in range(D+1)]
X=[[random.randint(0,p-1) for d in range(D)] for i in range(N)]
Y=[random.randint(0,p-1) if random.randint(0,99) < R else subs(c,x+[1]) for x in X]
\end{lstlisting}
with $N \coloneqq \# I = 10^5$, and $R \coloneqq 100r$, where ``case'' denotes the case number of each process. In order to express the performance in a way independent of machine specification, we list
\bi
\item[$t$:] the case number,
\item[$c_0$:] how many times the initialisation $I' = \emptyset$ is retried after failure,
\item[$c_1$:] how many times the searching of a new $i$ to extend $I'$ is retried after failure when the threshold condition $\# I' \geq n$ holds, and
\item[$s$:] whether the linear regression returns a correct coefficient vector $\vec{c}$ or not (T or F for short).
\ei

\begin{table}[H]
\begin{center}
\begin{minipage}{0.45\textwidth}
\centering
\caption{$D = 20, r = 0.01$}
\begin{tabular}{|r||r|r|r|}
\hline
$t$ & $c_0$ & $c_1$ & $s$ \\
\hline
\hline
0 & 0 & 1 & T \\
\hline
1 & 0 & 0 & T \\
\hline
2 & 0 & 0 & T \\
\hline
3 & 0 & 0 & T \\
\hline
4 & 0 & 0 & T \\
\hline
5 & 0 & 0 & T \\
\hline
6 & 0 & 1 & T \\
\hline
7 & 0 & 0 & T \\
\hline
8 & 0 & 1 & T \\
\hline
9 & 0 & 0 & T \\
\hline
\end{tabular}
\end{minipage}
\begin{minipage}{0.45\textwidth}
\centering
\caption{$D = 20, r = 0.03$}
\begin{tabular}{|r||r|r|r|}
\hline
$t$ & $c_0$ & $c_1$ & $s$ \\
\hline
\hline
0 & 1 & 3 & T \\
\hline
1 & 0 & 0 & T \\
\hline
2 & 1 & 3 & T \\
\hline
3 & 0 & 0 & T \\
\hline
4 & 0 & 1 & T \\
\hline
5 & 0 & 0 & T \\
\hline
6 & 0 & 0 & T \\
\hline
7 & 0 & 0 & T \\
\hline
8 & 0 & 1 & T \\
\hline
9 & 0 & 0 & T \\
\hline
\end{tabular}
\end{minipage}

\vs
\begin{minipage}{0.45\textwidth}
\centering
\caption{$D = 40, r = 0.01$}
\begin{tabular}{|r||r|r|r|}
\hline
$t$ & $c_0$ & $c_1$ & $s$ \\
\hline
\hline
0 & 0 & 0 & T \\
\hline
1 & 1 & 3 & T \\
\hline
2 & 0 & 0 & T \\
\hline
3 & 0 & 1 & T \\
\hline
4 & 0 & 2 & T \\
\hline
5 & 0 & 0 & T \\
\hline
6 & 0 & 2 & T \\
\hline
7 & 1 & 3 & T \\
\hline
8 & 0 & 0 & T \\
\hline
9 & 0 & 0 & T \\
\hline
\end{tabular}
\end{minipage}
\begin{minipage}{0.45\textwidth}
\centering
\caption{$D = 40, r = 0.03$}
\begin{tabular}{|r||r|r|r|}
\hline
$t$ & $c_0$ & $c_1$ & $s$ \\
\hline
\hline
0 & 0 & 0 & T \\
\hline
1 & 2 & 7 & T \\
\hline
2 & 0 & 0 & T \\
\hline
3 & 0 & 0 & T \\
\hline
4 & 2 & 8 & T \\
\hline
5 & 0 & 0 & T \\
\hline
6 & 2 & 8 & T \\
\hline
7 & 0 & 0 & T \\
\hline
8 & 0 & 0 & T \\
\hline
9 & 0 & 0 & T \\
\hline
\end{tabular}
\end{minipage}
\end{center}
\end{table}

\begin{table}[H]
\begin{center}
\begin{minipage}{0.45\textwidth}
\centering
\caption{$D = 60, r = 0.01$}
\begin{tabular}{|r||r|r|r|}
\hline
$t$ & $c_0$ & $c_1$ & $s$ \\
\hline
\hline
0 & 0 & 0 & T \\
\hline
1 & 0 & 1 & T \\
\hline
2 & 0 & 0 & T \\
\hline
3 & 0 & 0 & T \\
\hline
4 & 0 & 1 & T \\
\hline
5 & 1 & 3 & T \\
\hline
6 & 2 & 6 & T \\
\hline
7 & 1 & 3 & T \\
\hline
8 & 0 & 0 & T \\
\hline
9 & 0 & 0 & T \\
\hline
\end{tabular}
\end{minipage}
\begin{minipage}{0.45\textwidth}
\centering
\caption{$D = 60, r = 0.03$}
\begin{tabular}{|r||r|r|r|}
\hline
$t$ & $c_0$ & $c_1$ & $s$ \\
\hline
\hline
0 & 3 & 10 & T \\
\hline
1 & 0 & 0 & T \\
\hline
2 & 0 & 0 & T \\
\hline
3 & 0 & 2 & T \\
\hline
4 & 2 & 6 & T \\
\hline
5 & 2 & 6 & T \\
\hline
6 & 0 & 0 & T \\
\hline
7 & 1 & 3 & T \\
\hline
8 & 1 & 3 & T \\
\hline
9 & 0 & 1 & T \\
\hline
\end{tabular}
\end{minipage}

\vs
\begin{minipage}{0.45\textwidth}
\centering
\caption{$D = 80, r = 0.01$}
\begin{tabular}{|r||r|r|r|}
\hline
$t$ & $c_0$ & $c_1$ & $s$ \\
\hline
\hline
0 & 2 & 6 & T \\
\hline
1 & 1 & 4 & T \\
\hline
2 & 0 & 0 & T \\
\hline
3 & 1 & 3 & T \\
\hline
4 & 0 & 0 & T \\
\hline
5 & 3 & 9 & T \\
\hline
6 & 1 & 3 & T \\
\hline
7 & 0 & 0 & T \\
\hline
8 & 3 & 9 & T \\
\hline
9 & 3 & 9 & T \\
\hline
\end{tabular}
\end{minipage}
\begin{minipage}{0.45\textwidth}
\centering
\caption{$D = 80, r = 0.03$}
\begin{tabular}{|r||r|r|r|}
\hline
$t$ & $c_0$ & $c_1$ & $s$ \\
\hline
\hline
0 & 2 & 6 & T \\
\hline
1 & 0 & 0 & T \\
\hline
2 & 13 & 40 & T \\
\hline
3 & 7 & 21 & T \\
\hline
4 & 5 & 15 & T \\
\hline
5 & 11 & 34 & T \\
\hline
6 & 9 & 27 & T \\
\hline
7 & 12 & 36 & T \\
\hline
8 & 3 & 9 & T \\
\hline
9 & 5 & 16 & T \\
\hline
\end{tabular}
\end{minipage}

\vs
\begin{minipage}{0.45\textwidth}
\centering
\caption{$D = 100, r = 0.01$}
\begin{tabular}{|r||r|r|r|}
\hline
$t$ & $c_0$ & $c_1$ & $s$ \\
\hline
\hline
0 & 1 & 3 & T \\
\hline
1 & 2 & 6 & T \\
\hline
2 & 0 & 0 & T \\
\hline
3 & 4 & 12 & T \\
\hline
4 & 0 & 0 & T \\
\hline
5 & 1 & 3 & T \\
\hline
6 & 7 & 21 & T \\
\hline
7 & 0 & 0 & T \\
\hline
8 & 0 & 0 & T \\
\hline
9 & 0 & 0 & T \\
\hline
\end{tabular}
\end{minipage}
\begin{minipage}{0.45\textwidth}
\centering
\caption{$D = 100, r = 0.03$}
\begin{tabular}{|r||r|r|r|}
\hline
$t$ & $c_0$ & $c_1$ & $s$ \\
\hline
\hline
0 & 14 & 43 & T \\
\hline
1 & 2 & 6 & T \\
\hline
2 & 29 & 87 & T \\
\hline
3 & 14 & 43 & T \\
\hline
4 & 1 & 3 & T \\
\hline
5 & 5 & 15 & T \\
\hline
6 & 0 & 2 & T \\
\hline
7 & 9 & 27 & T \\
\hline
8 & 37 & 111 & T \\
\hline
9 & 2 & 7 & T \\
\hline
\end{tabular}
\end{minipage}
\end{center}
\end{table}

We note that the process does not necessarily terminate, but at least all the processes in the experiment above are marked as T, i.e. terminated with correct returns $\vec{c}$.

\vs
By Theorem \ref{probability of dimension n - 1}, the value
\be
\frac{(1 - p^{-(D + 2 - n)} u)^{n - 1}}{u^n \prod_{d=0}^{n-2} (1 - p^{-(D - d)})} = u^{-1} \prod_{d=0}^{n-2} \frac{u^{-1} - p^{-(D + 2 - n)}}{1 - p^{-(D - d)}}
\ee
is a lower bound of the expected value of $c_0$, and becomes large when $r$ and $D$ are not so small. In particular, when $D = 100$ and $r = 0.1$, then $c_0$ is expected to be larger than $4979$, and indeed the process did not terminate in $1700$ trials of initialisation $I' = \emptyset$.

\section{Digitwise Linear Regression}
\label{Digitwise Linear Regression}

We introduce linear regression over the ring $\Zp$ of $p$-adic integers based on repetition of linear regression modulo $p$. We note that multiplying by a power of $p$, we can reduce linear regression over $\Qp$ to that over $\Zp$.

\vs
Let $D$ be a non-negative integer. For a $\vec{c} = (c_d)_{d=0}^{D} \in \Zp^{D+1}$ and an $\vec{x} = (x_d)_{d=0}^{D-1} \in \Zp^D$, we set
\be
\left\langle \vec{c} , \vec{x} \right\rangle \coloneqq \sum_{d=0}^{D-1} c_d x_d + c_D.
\ee
by abusing the notation for $\Fp$ in \S \ref{Repetitive Inclusion Decision}. When we refer to a {\it linear equation} in this section, we mean an equation on $(\vec{x},y) \in \Zp^D \times \Zp$ of the form
\be
y = \left\langle \vec{c} , \vec{x} \right\rangle,
\ee
and call $\vec{c}$ {\it the coefficient vector of the linear equation}.

\vs
A subset $W \subset \Zp^D \times \Zp$ is said to be an {\it affine subspace of codimension $1$} if there exists a $\vec{c} \in \Zp^{D+1}$ such that $W = \set{(\vec{x},y) \in \Zp^D \times \Zp}{y = \langle \vec{c} , \vec{x} \rangle}$. We call such a $\vec{c}$ a {\it defining coefficient vector} of $W$. For a non-empty subset $S \subset \Zp^D \times \Zp$, we define its {\it affine hull} in $\Zp^D \times \Zp$ as the intersection of all affine subspace of codimension $1$ containing $S$, which is formally defined to be the ambient space $\Zp^D \times \Zp$ when there is no such affine subspace.

\vs
Let $I$ be a finite non-empty set, $X = (\vec{x}_i)_{i \in I} \in (\Zp^D)^I$ a sequence of vectors of $p$-adic integers, $Y = (y_i)_{i \in I} \in \Zp^I$ a sequence of $p$-adic integers, $V \subset \Zp^D \times \Zp$ an affine subspace of codimension $1$, $\vec{c} \in \Zp^{D+1}$ a defining coefficient of $V$, and $r \in [0,1)$. We set
\be
I_e \coloneqq \set{i \in I}{y_i - \left\langle \vec{c} , \vec{x}_i \right\rangle \in p^e \Zp}
\ee
for each $e \in \N$.

\vs
Let $e \in \N$. For any $(\vec{x},y) \in V$, we have
\be
\frac{y - \left\langle \vec{c} \bmod p^e , \vec{x} \right\rangle}{p^e} = \left\langle \frac{\vec{c} - (\vec{c} \bmod p^e)}{p^e} , \vec{x} \right\rangle,
\ee
where $\vec{c} \bmod p^e \in (\Z/p^e \Z)^{D+1}$ is naturally identified with its representative in $(\N \cap [0,p^e))^{D+1}$. This implies that $\set{(\vec{x}, p^{-e}(y - \langle \vec{c} \bmod p^e , \vec{x} \rangle))}{(\vec{x},y) \in V}$ is an affine subvariety of $\Zp^D \times \Zp$ of codimension $\leq 1$ with defining coefficient $p^{-e}(\vec{c} - (\vec{c} \bmod p^e))$.

\vs
Let $E \in \N \cap [1,\infty)$. We assume that the affine hull of $\set{(\vec{x}_i \bmod p, y_i \bmod p)}{i \in I_E}$ in $\Fp^D \times \Fp$ coincides with $\ol{V} \coloneqq \set{\vec{v} \bmod p}{\vec{v} \in V} \subset \Fp^D \times \Fp$, and the inequality
\be
\frac{\#(I_e \setminus I_{e+1})}{\# I_e} \leq r,
\ee
holds for any $e \in \N \cap [0,E)$. We consider how to estimate $\vec{c}$ modulo $p^E$ from the data of $X$ and $Y$.

\subsection{Estimation of the Last Digit}
\label{Estimation of the Last Digit}

We set $\ol{X} \coloneqq (\vec{x}_i \bmod p)_{i \in I} \in \Fp^D$ and $\ol{Y} \coloneqq Y \bmod p$. We note that if $X$ is concentrated on a $p$-adic neighbourhood of an affine subspace of $\Zp^D$ of positive codimension, $\ol{X}$ can be a subset of an affine subspace of $\Fp^D$ of positive codimension.

\vs
We assume that $\ol{X}$ is sufficiently random. Formally speaking, we assume that the affine hull of $\set{(\vec{x}_i \bmod p , y_i \bmod p)}{i \in I_1}$ in $\Fp^D \times \Fp$ coincides with $\ol{V}$ and $((\vec{x}_i \bmod p , y_i \bmod p))_{i \in I_1}$ is derived from a sequence of random samplings of $\ol{V}$ in $\Fp^D \times \Fp$ of noise probability $r$ in the sense of \S \ref{Analysis of Probability}. By $I = I_0$, we have
\be
\frac{\#(I \setminus I_1)}{\# I} = \frac{\#(I_0 \setminus I_1)}{\# I_0} \leq r,
\ee
and $I_1 = \set{i \in I}{(\vec{x} \bmod p \rangle,y_i \bmod p) \in \ol{V}}$. Therefore, $(I,\ol{X},\ol{Y})$ satisfies the assumptions of \S \ref{Repetitive Inclusion Decision}.

\vs
Applying Algorithm \ref{linear regression mod p} to $(I,\ol{X},\ol{Y})$, we obtain a vector $\theta \in \Fp^{D+1}$ of integers modulo $p$ such that the equality $\ol{V} = \set{(\vec{x},y) \in \Fp^D \times \Fp}{y = \langle \theta , \vec{x}\rangle}$ is expected to hold. By the uniqueness of a coefficient vector defining a horizontal affine subspace of $\Fp^D \times \Fp$ of codimension $1$, we have $\theta = \vec{c} \bmod p$ if the expectation is actually true. Therefore, $\theta$ is an estimation of $\vec{c} \bmod p$, i.e.\ the vector of the last digits of entries of $\vec{c}$. Here is a pseudocode of this process:

\begin{figure}[H]
\begin{algorithm}[H]
\caption{Estimation of $\vec{c} \bmod p$ from $(I,X,Y)$}
\label{last digit linear regression}
\begin{algorithmic}[1]
\Function {LastDigitLinearRegression}{$p,I,X,Y,\textrm{rep}$}
	\State $\theta \gets$ \Call{LinearRegressionModulo}{$p,I,\ol{X},\ol{Y},\textrm{rep}$}
	\State \Return the vector of the unique representatives in $\N \cap [0,p)$ of entries of $\theta$
\EndFunction
\end{algorithmic}
\end{algorithm}
\end{figure}

\subsection{Estimation of the Trailing Digits}
\label{Estimation of the Trailing Digits}

Assuming that $\ol{X}$ is sufficiently random and the expectation $\theta = \vec{c} \bmod p$ holds, we next explain how to estimate the second last digits of entries of $\vec{c}$. We denote by $\tl{\theta} \in (\N \cap [0,p))^{D+1}$ the return value of \Call{LastDigitLinearRegression}{$(p,I,X,Y,\textrm{rep})$}, where $\textrm{rep}$ is the hyper parameter used in this context. We set
\be
\vec{c}_1 & \coloneqq & p^{-1} \left( \vec{c} - \tl{\theta} \right) \in \Qp^{D+1} \\
Y_1 = (y_{1,i})_{i \in I_1} & \coloneqq & p^{-1} \left( y_i - \left\langle \tl{\theta} , \vec{x}_i \right\rangle \right)_{i \in I_1} \in \Qp^{I_1}
\ee
By the assumption, we have $\vec{c}_1 \in \Zp^{D+1}$ and $Y_1 \in \Zp^{I_1}$. For any $i \in I_1$, we have
\be
y_{i,1} - \left\langle \vec{c}_1 , \vec{x}_i \right\rangle & = & p^{-1} \left( y_i - \left\langle \tl{\theta} , \vec{x}_i \right\rangle - \left\langle p^{-1} \left( \vec{c} - \tl{\theta} \right) , \vec{x}_i \right\rangle \right) \\
& = & p^{-1} \left( y_i - \left\langle \vec{c} - \vec{x}_i \right\rangle \right),
\ee
and hence $(E-1,I_1,X,Y_1,\vec{c}_1)$ satisfies the assumption of $(E,I,X,Y,\vec{c})$ as long as $E > 1$ and the affine hull of $\set{(\vec{x}_i \bmod p, y_{1,i} \bmod p)}{i \in I_{E-1}}$ in $\Fp^D \times \Fp$ coincides with $\ol{V}$. In particular, replacing $(E,I,X,Y,\vec{c})$ by $(E-1,I_1,X,Y_1,\vec{c}_1)$, we obtain an estimation of the second last digits of entries of $\vec{c}$.

\vs
By recursively replacing $(E,I,X,Y,\vec{c})$ by $(E-1,I_1,X,Y_1,\vec{c}_1)$ under the assumptions on the affine hulls for all recursive steps, we can estimate the trailing $E$-digits of entries of $\vec{c}$ by repetition of Algorithm \ref{last digit linear regression}. Here is a pseudocode of this process:

\begin{figure}[H]
\begin{algorithm}[H]
\caption{Estimation of $\vec{c} \bmod p^E$ from $(E,I,X,Y)$}
\label{trailing digits linear regression}
\begin{algorithmic}[1]
\Function {TrailingDigitsLinearRegression}{$p,E,I,X,Y,\textrm{rep}$}
	\State $\vec{c} \gets (0)_{d=0}^{D}$
	\ForAll {$e \in \N \cap [0,E)$}
		\State $\tl{\theta} \gets$ \Call{LastDigitLinearRegression}{$p,I,X,p^{-e}( y_i - \langle \vec{c} , \vec{x}_i \rangle )_{i \in I},\textrm{rep}$}
		\State $\vec{c} \gets \vec{c} + p^e \tl{\theta}$
		\State $I \gets \set{i \in I}{y_i - \langle \vec{c} , \vec{x}_i \rangle \in p^{e+1} \Zp}$
	\EndFor
	\State \Return $\vec{c}$
\EndFunction
\end{algorithmic}
\end{algorithm}
\end{figure}

We note that when we deal with $p$-adic integers in implementation, we just use its reduction modulo a sufficiently large power $q$ of $p$. We can choose $q$ arithmetically in a way much simpler than the interval arithmetic of real numbers, because of the non-Archimedean property. For example, in Algorithm \ref{trailing digits linear regression}, we only need to consider $X \bmod p^E$ and $Y \bmod p^E$ instead of $X$ and $Y$. The reader should be careful that $X$ cannot be simply replaced by $\ol{X}$ because we used the data of sample points modulo higher power of $p$ in the recursion process.

\vspace{0.3in}
\addcontentsline{toc}{section}{Acknowledgements}
\noindent {\Large \bf Acknowledgements}
\vspace{0.2in}

\noindent
I thank an anonymous referee assigned to the first version of the paper for giving many constructive suggestions such as
\bi
\item[(1)] the suggestion to make the approximate estimations of $\# I_W / \# I$ more tight,
\item[(2)] the suggestion to remove the fitting/verification bias in line 17 in Algorithm \ref{noise-free matrix},
\item[(3)] the suggestion to include results on a specific probabilistic model, which results in inserting \S \ref{Analysis of Probability}, and
\item[(4)] the suggestion to remove the ambiguity on whether $A$ is updated or not through the call of Algorithm \ref{extending indices 2} in Algorithm \ref{linear regression mod p} by returning $(I',A,L)$ instead of $(I',L)$,
\ei
and also for pointing out errors such as
\bi
\item[(5)] the inconsistency of the written pseudocodes with the actual explanations of Algorithm \ref{noise-free matrix} and Algorithm \ref{extending indices 1},
\item[(6)] an arithmetic error in Algorithm \ref{trailing digits linear regression}, and
\item[(7)] an unnatural and wrong assumption on $I$ in \S \ref{Estimation of the Last Digit}
\ei
I thank all people who helped me to learn mathematics and programming. I also thank my family.

%

\end{document}